\newcommand{\stkout}[1]{\ifmmode\text{\sout{\ensuremath{#1}}}\else\sout{#1}\fi}
\newtheorem{theorem}{Theorem}
\newtheorem{lemma}[theorem]{Lemma}
\renewcommand{\ket}[1]{|#1\rangle}
\renewcommand{\bra}[1]{\langle#1|}
\newcommand{\bracket}[3]{\langle#1|#2|#3\rangle}
\newcommand{\ketbr}[1]{\ket{#1}\!\bra{#1}}
\newcommand{\pushright}[1]{\ifmeasuring@#1\else\omit\hfill$\displaystyle#1$\fi\ignorespaces}
\newcommand{\pushleft}[1]{\ifmeasuring@#1\else\omit$\displaystyle#1$\hfill\fi\ignorespaces}
\begin{document}
	

	\title{Maximally non-projective measurements are not always symmetric informationally complete }

	\author{Gabriele Cobucci}\thanks{These authors contributed equally.}
	\affiliation{Physics Department and NanoLund, Lund University, Box 118, 22100 Lund, Sweden.}
	
	\author{Raphael Brinster}\thanks{These authors contributed equally.}
	\affiliation{Institute for Theoretical Physics III, Heinrich Heine University D\"usseldorf, D-40225 D\"usseldorf, Germany}
	
	\author{Shishir Khandelwal}
	\affiliation{Physics Department and NanoLund, Lund University, Box 118, 22100 Lund, Sweden.}

	\author{Hermann Kampermann}
	\affiliation{Institute for Theoretical Physics III, Heinrich Heine University D\"usseldorf, D-40225 D\"usseldorf, Germany}

	\author{Dagmar Bru\ss }
	\affiliation{Institute for Theoretical Physics III, Heinrich Heine University D\"usseldorf, D-40225 D\"usseldorf, Germany}
	
	\author{Nikolai Wyderka}
	\affiliation{Institute for Theoretical Physics III, Heinrich Heine University D\"usseldorf, D-40225 D\"usseldorf, Germany}
	
	\author{Armin Tavakoli}
	\affiliation{Physics Department and NanoLund, Lund University, Box 118, 22100 Lund, Sweden.}
	
	\begin{abstract}
		Standard quantum measurements are projective. However, the full scope of quantum measurements is represented by positive operator-valued measures (POVMs) and many of these break the limitations of projective measurements as resources in quantum information. It is therefore natural to consider how accurately an experimenter with access only to projective measurements and classical processing can simulate POVMs. The most well-known class of non-projective measurements  is called symmetric informationally complete (SIC). Such measurements are both ubiquitous in the broader scope of quantum information theory and known to be the most strongly non-projective measurements in qubit systems. Here, we show that beyond qubit systems, the SIC property is in general not associated with the most non-projective measurement. For this, we put forward a semidefinite programming criterion for detecting genuinely non-projective measurements. This method allows us to determine quantitative simulability thresholds for generic POVMs and to put forward a conjecture on which qutrit and ququart measurements that are most strongly non-projective.

	\end{abstract}

	\date{\today}

	\maketitle

	\textit{Introduction.---}
	Standard measurements in quantum theory are complete sets of orthogonal projectors. However, it is well-established that the full scope of quantum measurements is represented by the more general notion of positive operator-valued measures (POVMs). On a given Hilbert space, there exist non-projective POVMs that cannot be reduced to projective measurements \cite{DAriano2005}. The seminal example of such a measurement is the symmetric informationally complete (SIC) POVM \cite{Zauner1999, Renes2004}. A $d$-dimensional SIC-POVM has the largest possible  number of irreducible outcomes and all the angles between its measurement operators are identical. These measurements are broadly important in quantum theory. For instance, they are optimal for both state tomography \cite{Scott2006}  and randomness generation \cite{Acin2016}, they play a key role in some approaches to the foundations of quantum theory \cite{Fuchs2013} and they serve as tools for  entanglement witnessing \cite{Shang2018, Tavakoli2024a}, contextuality \cite{Bengtsson2012}, nonlocality \cite{Gisin2019, Huang2021}, self-testing \cite{Tavakoli2019, Mironowicz2019} etc. A plethora of experimental platforms developed for implementing non-projective measurements have primarily focused on realising SIC-POVMs \cite{Zhao2015, Bian2015, Hou2018, Smania2020, Stricker2022, Wang2023, Feng2025}. Their existence in all dimensions is a famous open-ended conjecture \cite{Zauner1999, Fuchs2017, Horodecki2022}.
	
	In view of this, it may be expected that SIC-POVMs are, in the appropriate sense, the POVMs that most strongly defy the limitations of standard projective measurements. This was conjectured in Ref.~\cite{Oszmaniec2017}. The same work also formalised the study of simulating POVMs using only projective measurements and classical resources. This is not only  conceptually motivated  but it is also practically relevant since non-projective measurements are considerably more expensive to implement than projective measurements; they need ancilla systems and entangling gates. As in most resource theories, there are infinitely many mathematically legitimate ways of quantifying projective simulability. However, for projective simulability the standard benchmark employed in the literature  is based on a depolarising channel (see e.g.~\cite{Oszmaniec2017, Hirsch2017, Guerini2017, Singal2022, Martinez2023,kotowski2025prettygoodsimulationquantummeasurements}). This is because the depolarisation is unbiased with respect to the POVM under consideration, experimentally meaningful and mathematically elegant. The extent to which a given POVM is genuinely non-projective is then the smallest amount of depolarisation noise that it can be exposed to before admitting a projective simulation.

	For qubit systems, it has been proven that SIC-POVMs indeed  are the most non-projective measurements \cite{Hirsch2017}.  Here, however, we show that the same is not in general true for dimensions higher than two. For this, we put forward a necessary condition for projective simulability which admits a semidefinite programming (SDP) formulation, a standard optimisation tool used in quantum information \cite{Tavakoli2024b}. We first deploy it to investigate the projective simulability of SIC-POVMs in dimension three and show that the SIC property is insufficient to determine the simulability thresholds, i.e.~some SIC-POVMs are more non-projective than others. Next, we consider SIC-POVMs in dimension four and show that none of them correspond to the most non-projective measurement. Remarkably, we find that all four-dimensional SIC-POVMs are less robust than their lower-dimensional counterparts. Perhaps counterintuitively, this shows that SIC-POVMs do not always become harder to simulate as we increase the dimension. 
	
	In view of these results, it is natural to ask the following question: if not SIC-POVMs, then which measurements are the most non-projective? Toward answering this, we use our SDP criterion as an oracle in a systematic numerical search for the most strongly non-projective POVM. For qutrits, we find that it returns one single SIC-POVM (among infinitely many possible). For ququarts, it returns a ten-outcome POVM which is not a SIC-POVM but bears a clear conceptual resemblance to it. This leads us to investigate this new type of ``flagged'' measurement and its capability of defying projective simulations.

	\textit{Projective simulation of POVMs.---} Consider that we are given an $n$-outcome POVM $\mathbf{E}=\{E_a\}_{a=1}^n$ on a $d$-dimensional Hilbert space, where $a$ denotes the outcome. Our goal is to simulate it by using only the following restricted class of resources. (i) Classical randomness, represented by a stochastic variable $\lambda$ subject to a probability distribution $\{q_\lambda\}_\lambda$. (ii) A set of projective measurements, $\{P_{k|\lambda}\}$, where $k$ represents the outcome. The variable $\lambda$ is interpreted as the measurement choice. (iii) Classical post-processing of  $k$ and $\lambda$, through a probabilistic rule $p(a|k,\lambda)$, which produces the final output $a$. If these resources can reproduce the POVM, it takes the form $
	E_a=\sum_\lambda q_\lambda \sum_{k} p(a|k,\lambda)P_{k|\lambda}$. If such a model exists, we say that $\mathbf{E}$ admits a projective simulation \cite{Oszmaniec2017}. 
	However, it turns out that one can without loss of generality omit the post-processing, i.e.~we can always select it as $p(a|k,\lambda)=\delta_{a,k}$. This leads to the simpler form
	\begin{equation}\label{PVMsim}
		E_a=\sum_\lambda q_\lambda P_{a|\lambda},
	\end{equation}
	where the projective measurements $P_{a|\lambda}$ in the simulation can also be null matrices. In the Supplemental Material \cite{supplemental_material} (see also Refs. \cite{Skrzypczyk_2023,Appleby_2005,Scott_2010,Webb_2016}), we provide a constructive proof in which it is shown how to explicitly eliminate the post-processing. Notably, such simulable POVMs can be realized without access to additional degrees of freedom, even though they are not necessarily projective. This shows that other (non-convex) measures of non-projectiveness, like unsharpness, are insufficient to capture this feature \cite{liu2021quantifying}.

	We denote the set of projectively simulable POVMs by $\mathcal{P}$. Assume that $\mathbf{E}$ is not projectively simulable ($\mathbf{E}\notin \mathcal{P}$). Then, a natural way to quantify its magnitude of non-projectivity is through applying the depolarisation channel, $\Phi_v(X)=vX+\frac{1-v}{d}\tr\left(X\right)\openone$, to each of the POVM elements\footnote{Since the depolarisation channel is self-dual, this is equivalent to performing the noise-free measurement on depolarised input states.} and determine the largest visibility, $v\in[0,1]$, for which a projective simulation is possible. The projective simulability threshold (sometimes called critical visibility) for $\mathbf{E}$ becomes
	\begin{equation}\label{eq:vstar}
		v^*(\mathbf{E})=\max\{ v| \{\Phi_v(E_a)\}_a\in \mathcal{P}\}.
	\end{equation}
	This measure can be interpreted as white noise tolerance of genuine non-projective features.

	\textit{Necessary condition for projective simulability.---} How can we determine if a given POVM $\mathbf{E}$ admits a projective simulation? A complete solution is not known. We identify a necessary condition in the form of an SDP. Such programs are known to be efficiently computable \cite{Tavakoli2024b}. The main idea is to partition the space of projective measurements over $d$-dimensional Hilbert space by the ranks of the measurement operators. Specifically, consider a vector $\vec{r}=(r_1,\ldots,r_n)$ such that $r_a$ are non-negative integers and $\sum_{a=1}^n r_a=d$. Every such vector can be associated with a distinct class of projective measurements, $\{P_a\}_a$, in which $\text{rank}(P_a)=r_a$. The union of all these classes is the set of all projective measurements. Therefore, we  call $\vec{r}$ a rank-vector and use it as an index for partitioning the space of projective measurement. Hence, we can substitute the projective measurement index $\lambda$ in Eq~\eqref{PVMsim} with the pair $(\vec{r},\mu)$, where $\vec{r}$ specifies the rank-vector and $\mu$ indexes the projective measurement within that class. Inserted into \eqref{PVMsim}, this leads to $E_a=\sum_{\vec{r}} F_{a|\vec{r}}$, where $F_{a|\vec{r}}=\sum_{\mu} q_{\vec{r},\mu} P_{a|\vec{r},\mu}$. By construction, we have that $F_{a|\vec{r}}\succeq 0$. Furthermore, using conditional probability we can write $q_{\vec{r},\mu}=q_{\vec{r}}\ q_{\mu|\vec{r}}$. Then, it follows from measurement completeness and normalisation that $\sum_a F_{a|\vec{r}}=q_{\vec{r}}\openone$. Finally, since by definition $\text{rank}(P_{a|\vec{r},\mu})=r_a$ and $P_a$ is projective, it follows that $\tr\left(P_{a|\vec{r},\mu}\right)=r_a$. Hence, $\tr(F_{a|\vec{r}})=q_{\vec{r}}\ r_a$. Putting this together, our necessary condition for $\mathbf{E}\in \mathcal{P}$ becomes
	\begin{equation}\label{SDPcond}
		\begin{aligned}
			\text{find} & \quad \{F_{a|\vec{r}}\}_{a,\vec{r}}\\
			\text{s.t.}& \quad	\sum_{\vec{r}} F_{a|\vec{r}}=E_a, \quad \quad \tr(F_{a|\vec{r}})=q_{\vec{r}}\ r_a,\\
			&	\quad\sum_a F_{a|\vec{r}}=q_{\vec{r}}\openone, \quad\text{and}\quad F_{a|\vec{r}}\succeq 0.
		\end{aligned}
	\end{equation}
	which is an SDP. Note that the second constraint implies that the distribution over the rank-vectors is given by $q_{\vec{r}}=\frac{1}{d}\sum_a \tr(F_{a|\vec{r}})$. In addition, the SDP can be adapted to compute upper bounds on the projective simulability threshold visibility $v^*(\mathbf{E})$. For this, we need only to introduce the visibility, $v$, as the objective of the SDP, replace $E_a$ with $\Phi_v(E_a)$ in the constraints and run a maximisation of the objective. For qubits and qutrits \eqref{SDPcond} reduces to the one presented in \cite{Oszmaniec2017}. Our implementation is available at \cite{github-code}.
	
	Moreover, sometimes the necessary condition \eqref{SDPcond} is also sufficient for projective simulation. This is the case for any qubit or qutrit POVM. To see this, recall that a projective measurement cannot have more than $d$ non-zero outcomes. Firstly, all  extremal dichotomic measurements are projective \cite{Davies1970}. Secondly, $d$-outcome measurements are a strict superset of the measurements defined via the criterion  \eqref{SDPcond}. This implies that it is exact for qubits. Moreover, it was shown in Ref.~\cite{Oszmaniec2017} that qutrit measurements with unit-trace measurement operators only have projective extremals. Hence, optimality follows also for qutrits.

	Another relevant use of the SDP criterion is to determine witnesses for $\mathbf{E}\notin \mathcal{P}$. These witnesses come in the form of linear inequalities, satisfied by all $\mathbf{E}\in\mathcal{P}$, in the probabilities obtained from probing the measurement with a few selected preparations. These inequalities can in general be extracted from the dual formulation of the SDP; see the Supplemental Material \cite{supplemental_material}. However, we also propose a simple  witness ansatz. Consider a rank-one non-projective measurement $\mathbf{\tilde{E}}$ where $\tilde{E}_a=c_a \ketbra{\psi_a}{\psi_a}$. To show that an arbitrary POVM $\mathbf{E}\notin \mathcal{P}$, we consider a witness inspired by quantum state discrimination,
	\begin{equation}\label{witness}
		W_\mathbf{\tilde{E}}(\mathbf{E})=\frac{1}{d}\sum_{a=1}^n  \bracket{\psi_a}{E_a}{\psi_a},
	\end{equation}
	i.e.~we probe the measurement with the eigenstates of the target POVM with a uniform prior $\frac{1}{d}$. Note that the witness is upper bounded by 1 for any given $\mathbf{E}$,  $\text{max} \frac{1}{d}\sum_{a}\tr\left( \psi_aE_a\right)\leq \frac{1}{d} \sum_{a} \lambda_{\text{max}}(E_a) \leq \frac{1}{d}\sum_{a} \tr\left( E_a\right) =1$.  Moreover, for $\mathbf{E} = \mathbf{\tilde{E}}$ we have  $W_\mathbf{\tilde{E}}(\mathbf{\tilde{E}})= 1$. Thus, we would expect that  $\mathbf{E}$ attains a value close to unit. To show that this is sufficient to elude projective simulation we need to determine the largest value of $	W_\mathbf{\tilde{E}}$ over all projective measurements. This can achieved by SDP, by using \eqref{witness} as the objective over the constraints in Eq~\eqref{SDPcond}. In the Supplemental Material \cite{supplemental_material}, we discuss how to leverage the linearity of $W_{\mathbf{\tilde{E}}}$ to evaluate the SDP more efficiently.

	\textit{Main problem.---} For a given dimension, $d$, it is natural to ask which POVM most strongly defies a projective simulation. This corresponds to finding the specific $\mathbf{E}$ that solves the problem 
	\begin{equation}\label{mainprob}
		v_d\equiv \min_{\mathbf{E}} v^*(\mathbf{E}).
	\end{equation}
	Here, $v_d$ is the visibility associated with the most non-projective measurement in the given dimension. 
	
	A natural candidate for solving this problem are SIC-POVMs. A POVM over $d$-dimensional Hilbert space is SIC if and only if $E_a=\frac{1}{d}\ketbra{\psi_a}{\psi_a}$ where $\left|\braket{\psi_a}{\psi_b}\right|^2=\frac{1}{d+1}$ for $a\neq b$. Such a measurement has $n=d^2$ outcomes, which is the largest possible number for an extremal POVM \cite{DAriano2005}.  All known SIC-POVMs (with one exception for $d=8$) are generated by the Weyl-Heisenberg  group \cite{Fuchs2017}. Define the group's generators  $X=\sum_{j=0}^{d-1} \ketbra{j\oplus 1}{j}$ and $Z=\sum_{j=0}^{d-1}\omega^{j}\ketbra{j}{j}$, where $\omega=e^{2\pi i/d}$. Then, a SIC-POVM is constructed from the orbit
	\begin{equation}\label{sicform}
		\ket{\psi_a}=X^{a_0}Z^{a_1}\ket{\varphi},
	\end{equation}
	where $\ket{\varphi}$ is a suitably chosen vector called a SIC-fiducial and where $a=(a_0,a_1)\in\{0,\ldots,d-1\}^2$.
	
	It was conjectured in Ref.~\cite{Oszmaniec2017} and proven in Ref.~\cite{Hirsch2017} that for $d=2$, the unique SIC-POVM $\mathbf{E}_\text{SIC2}$ is the solution to \eqref{mainprob}. It achieves the visibility $v_2=\sqrt{\frac{2}{3}}\approx 81.6\%$. To achieve this simulation, one requires six rank-one projective measurements corresponding to rank-vectors that are permutations of $(1,1,0,0)$. The corresponding Bloch vectors form a cuboctahedron on the Bloch sphere. It was further conjectured in Ref.~\cite{Oszmaniec2017} that SIC-POVMs also are the solution for $d>2$.

	\textit{Qutrit SIC-POVMs.---}  For qutrits, it is known that all SIC-POVMs admit the form \eqref{sicform} \cite{Szollosi2014, Hughston2016}. Also,  up to a unitary or antiunitary transformation, they can all be obtained from the one-parameter family of SIC-fiducials 
	\begin{equation}\label{eq:sic3fiducial}
		\ket{\varphi^{(\theta)}}=\frac{1}{\sqrt{2}}\left(\ket{1}-e^{i\theta}\ket{2}\right)
	\end{equation}
	for $\theta\in[0,\pi/9]$ \cite{Zhu2010}. Denoting these SIC-POVMs by $\mathbf{E}_\theta$, we use the SDP criterion to compute $v^*(\mathbf{E}_\theta)$. The results are plotted in Fig.~\ref{fig_sic3}. Since the curve is not flat, it means that the SIC-property is insufficient to determine the degree of projective simulability of the POVMs. This stems from the fact that while the SIC-property determines all the magnitudes of the overlaps between the POVM-elements, it does not determine their relative phases. Nevertheless, from the plot we see that every qutrit SIC-POVM has a projective simulability threshold visibility lower than that of the qubit SIC-POVM. In particular, the specific SIC-POVM that attains the smallest visibility ($\theta=0$) is unique and known as the Hesse SIC \cite{Hesse1844}. Its nine (unnormalised) vectors $\{\ket{\psi_a}\}_a$ are the columns of the matrix 
	\begin{equation}\label{Hesse}
		\begin{pmatrix}
			0 & 0 & 0 &-1&-\omega^2&-\omega&1&\omega& \omega^2\\
			1 &\omega & \omega^2&0&0&0&-1&-\omega^2&-\omega\\
			-1& -\omega^2& -\omega&1&\omega&\omega^2&0&0&0\\
		\end{pmatrix}
	\end{equation}
	and it leads to 
	\begin{equation}\label{v3}
		v_3\leq v^*(\mathbf{E}_\text{Hesse})=\frac{1}{6}\left(1+4\cos\left(\frac{\pi}{9}\right)\right)\approx  79.3\%.
	\end{equation}

	\begin{figure}[t!]
		\centering
		\includegraphics[width=0.95\columnwidth]{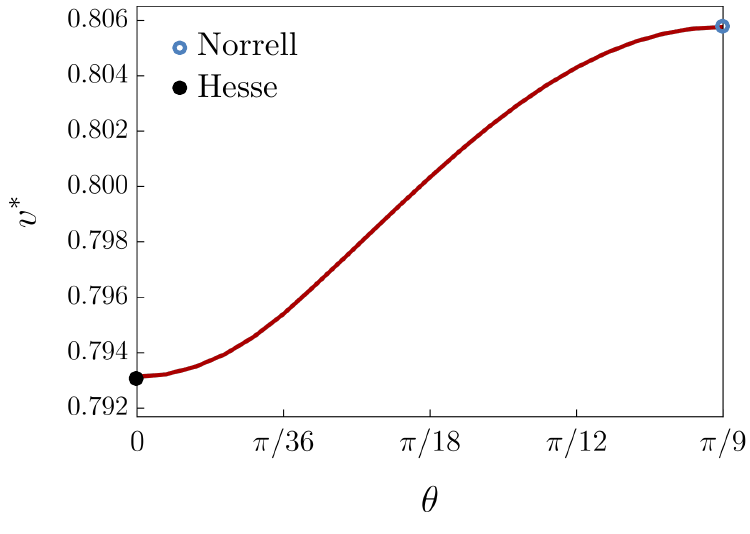}
		\caption{The projective simulability threshold visibility $v^*(\mathbf{E}_\theta)$ defined in Eq.~\eqref{eq:vstar}, for projective simulation of depolarised qutrit SIC-POVMs constructed from fiducial vector $\ket{\varphi^{(\theta)}}$ in Eq.~\eqref{eq:sic3fiducial}.} \label{fig_sic3}
	\end{figure}

	To better understand the projective simulation of the Hesse SIC-POVM, we have found an analytical model that gives the  closed expression above for the decimal solution returned by our SDP criterion. Our simulation strategy requires 72 equiprobable projective measurements. The main idea is to consider the projective Clifford group, which has 216 elements, and compute the orbit of a suitably chosen fiducial vector. The resulting $216$ vectors can be grouped into $72$ sets of $3$ vectors, such that each set forms an orthonormal basis, i.e.~a projective measurement. The model is detailed in the Supplemental Material \cite{supplemental_material}. In contrast, the SIC-POVM with highest visibility threshold ($\theta = \pi/9$) is known in the literature as Norrell SIC and is related to resourceful states in the resource theory of stabilizer computation \cite{Veitch_2014}.

	\textit{Ququart SIC-POVMs.---} In contrast to  three dimensions, there exists up to unitary and antiunitary transformations only a single SIC-POVM in dimension four \cite{Zhu2010_d4}. This POVM, $\mathbf{E}_{\text{SIC4}}$, is obtained from \eqref{sicform} by selecting the appropriate SIC-fiducial vector\footnote{The fiducial is $\ket{\varphi_4} \propto \sqrt{2}\ket{0} + [z(1-z)(\phi^{3/2}+ \bar{z})]\ket{1} +(2-\sqrt{2})i\ket{2} + [z(1-z)(\phi^{3/2}- \bar{z})]\ket{3}$ with $z = e^{i\pi/4}$ and the golden ratio $\phi = (1+\sqrt{5})/2$.}.
	In contrast to the lower-dimensional cases, the value obtained by the SDP yields only an upper bound on the projective simulability threshold visibility. Remarkably, however, the bound is exact for the SIC-POVM and we obtain $v^*(\mathbf{E}_{\text{SIC4}}) \approx 82.6\%$. It follows from the fact that the solution returned by our SDP has a particular structure: all the non-zero operators take the form $F_{a|\vec{r}}=q_{\vec{r}}\ O_{a|\vec{r}}$ for some projective operator $O_{a|\vec{r}}$. From \eqref{SDPcond} we have that $\sum_a F_{a|\vec{r}}=q_{\vec{r}}\openone$ which then implies that $\sum_a O_{a|\vec{r}}=\openone$. Thus, $\{O_{a|\vec{r}}\}_a$ is a projective measurement for every relevant $\vec{r}$. Hence, we have an explicit projective strategy that saturates our upper bound, thus implying that it is optimal. Our simulation relies on $132$  projective measurements.

	\textit{The most non-projective POVM.---} We now set out to investigate which are the most non-projective measurements beyond qubit systems. To study this problem, we have used a numerical search algorithm. This algorithm uses the SDP criterion for projective simulability as an oracle to iteratively explore the space of non-projective measurements. Specifically, its steps are the following. (i) Select a non-projective measurement from which to begin the search and use the SDP criterion to compute a bound on its visibility. (ii) From the dual SDP (see the Supplemental Material \cite{supplemental_material}), extract a witness for genuine non-projectivity. (iii) Compute the measurement that gives the largest violation of this witness. This can also be done by an SDP. (iv) Return to step (i) and use the measurement from (iii) as the input. This process is repeated until desired convergence is found. Our implementation is available at \cite{github-code}. As a basic check, we have run this process first for qubits and find that it essentially always converges to the SIC-POVM, which we know is the correct answer. 
	
	For qutrits, we have evaluated this search with many different initial POVMs and found that it has, essentially always, two possible fates: either it gets stuck in a specific local extremum or it bypasses it to converge to the Hesse SIC-POVM \eqref{Hesse}. This provides significant evidence that Eq~\eqref{v3} actually is an equality and that the Hesse SIC-POVM is the most non-projective qutrit measurement. Furthermore, also the POVM found in the local extremum is interesting, because it has only five (out of nine possible) outcomes. In fact, it turns out to closely resemble the qubit SIC-POVM, $\mathbf{E}_\text{SIC2}$. Its POVM elements are
	\begin{align}\label{fsic2}
		& \{E_1,E_2,E_3,E_4\}=\mathbf{E}_\text{SIC2}\oplus 0, && E_5 =\ketbra{2}{2}.
	\end{align} 
	Thus, the first four outcomes are just an embedding of the qubit SIC-POVM and the fifth outcome is a vector orthogonal to the qubit sub-space. Therefore, we call this a flagged SIC-POVM, $\mathbf{E}_{\text{fSIC2}}$. Its visibility is $v^*(\mathbf{E}_\text{fSIC2})\approx 79.6\%$, which is higher than for the Hesse SIC-POVM but lower than for the qubit SIC-POVM. Hence, simply by appending an orthogonal projection  (``flag''), in contrast to intuition, the non-projective features of the qubit SIC-POVM are amplified. In the Supplemental Material \cite{supplemental_material} we construct an analytical projective simulation of the noisy $\mathbf{E}_\text{fSIC2}$, as well as its generalisations to higher dimension, which shows that it exhibits a monotonically decreasing simulability threshold with increasing dimension. For example, the qubit SIC flagged in $d=4$ yields a threshold visibility of $78.6\%$, which is lower than $v^{*}(\mathbf{E}_{\text{SIC4}})$. This shows that $\mathbf{E}_{\text{SIC4}}$ cannot be the most non-projective measurement in $d=4$.

	\begin{figure}[t!]
		\centering
		\includegraphics[width=0.95\columnwidth]{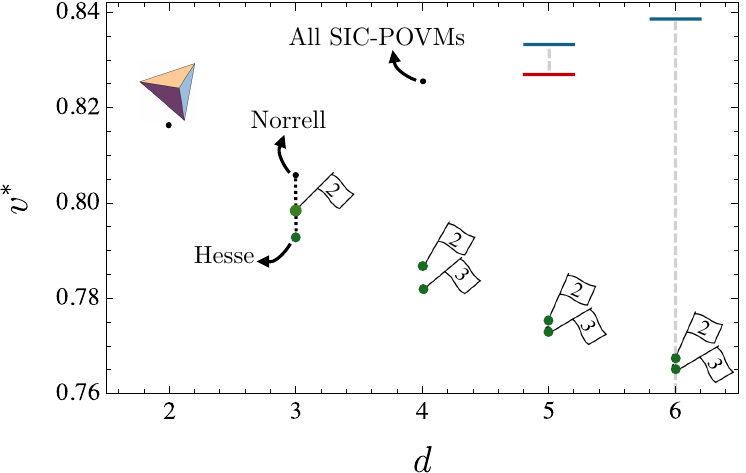}
		\caption{The threshold visibility $v^*$ from Eq.~\eqref{eq:vstar} obtained for various POVMS in $2\leq d\leq 6$. Black dots represent SIC-POVMs, while green ones fSIC-POVMs (the number in the flag corresponds to the dimension of the embedded SIC-POVM), for which the visibility thresholds are calculated in \cite{supplemental_material}. The blue lines represent upper bounds on the visibility obtained for SIC-POVMs in $d=5,6$ using the witness method, while the red line denotes a lower bound in $d=5$ found by explicit construction.}\label{fig_sicd}
	\end{figure}
    
	Therefore, we have repeated the numerical search also for ququarts. The best search result returns that $v_4\lesssim 78.2\%$. Interestingly, the corresponding POVM does not use all the possible 16 outcomes, but only ten of them. Indeed, this one too has a flagged structure: it is an embedding of the Hesse SIC-POVM appended with an orthogonal projection,
	\begin{align}\label{fsic3}
		\{E_1\ldots,E_9\}=\mathbf{E}_\text{Hesse}\oplus 0, && E_{10} =\ketbra{3}{3}.
	\end{align}
	We call this POVM $\mathbf{E}_\text{fSIC3}$. In analogy with the qutrit case, we construct a projective simulation of the noisy $\mathbf{E}_\text{fSIC3}$ (and its higher-dimensional generalizations) in the Supplemental Material \cite{supplemental_material} which confirms $v^*(\mathbf{E}_\text{fSIC3}) \approx 78.2\%$. The visibility of this POVM is lower than that of the SIC-POVM embedded into it and we conjecture that this is the most non-projective ququart measurement.

	\textit{Discussion.---} We have shown that SIC-POVMs in general are not the POVMs with the largest depolarisation noise-robustness with respect to projective simulations. For qutrits, we have shown that not all SIC-POVMs have the same robustness and for ququarts we have shown that SIC-POVMs are far from the most non-projective measurements. Nevertheless, our conjectures on the qutrit and ququart measurements that are the most non-projective clearly relate to SIC-inspired features.
	
	In particular, contrary to what we initially expected, we have found that the robustness of SIC-POVMs does not always decrease as the Hilbert space dimension gets larger. In Fig.~\ref{fig_sicd} we showcase this feature, while the quantitative results are listed in the Supplemental Material \cite{supplemental_material}. In this figure, we have also used the witness ansatz \eqref{witness} to compute upper bounds on the projective simulability threshold of SIC-POVMs in dimensions five and six. Notably, this method returns optimal results for $d=2$ and the Hesse SIC but somewhat suboptimal results for $d=4$. In $d=5$, by explicit construction we have shown that a projective simulation is possible up to $v\approx 82.7\%$ (see \cite{github-code}). Thus, the fact that SIC-POVMs are not the most non-projective measurements does not seem to be related to whether the dimension is prime or not. Moreover, we observe increasing simulability thresholds from dimensions $d=3$ to $d=5$. Whether this truly reflects a decaying non-projective property of SIC-POVMs with increasing dimension is an interesting problem beyond the scope of this work.

	So far, we limited our attention to the case of depolarising noise. A complementary model that is often used in more mathematically motivated literature on resource theories of measurements is derived from considering maximally detrimental measurements  \cite{Chitambar_2019}. This worst-case noise has well-behaved mathematical properties \cite{Schluck_2023}. It corresponds to the situation in which the target POVM, $\mathbf{E}$, is mixed with another arbitrary POVM, $\mathbf{E}_\text{noise}$. By this measure, the projective simulation threshold visibility of $\mathbf{E}$ is the largest $v$ such that one can find  some $\mathbf{E}_\text{noise}$ that satisfies $v\mathbf{E}+(1-v)\mathbf{E}_\text{noise}\in\mathcal{P}$ (see the Supplemental Material \cite{supplemental_material}); we refer to this as the ``worst-case'' noise. For example, if $\mathbf{E}=\mathbf{E}_\text{SIC2}$ then this ``worst-case'' noise will be another SIC-POVM to which $\mathbf{E}_\text{SIC2}$ is elementwise orthogonal. As shown in the Supplemental Material \cite{supplemental_material}, the corresponding  threshold visibility for any fixed POVM cannot decrease when adding flag dimensions to it, thus, it does not suffer from the unintuitive behaviour that we observe for $\mathbf{E}_\text{fSIC2}$ and $\mathbf{E}_\text{fSIC3}$, namely that they get more robust to noise if additional trivial flags are added to it.
	
	For the sake of completeness, we have used our SDP criterion to analyse the problem of finding the most non-projective measurement also under this global measure. The results are qualitatively analogous to those for depolarisation noise. For qubits and qutrits we find that the same SIC-POVMs are optimal. What is more, for ququarts, we again find that the SIC-POVM is not the most non-projective measurement and that the flagged Hesse SIC-POVM \eqref{fsic3} performs better. See the Supplemental Material \cite{supplemental_material} for quantitative results. This indicates that the main results reported here are not artefacts unique to the choice of depolarisation as the quantifier of non-projective properties.
	
	Our findings open several paths for future investigations. First, the question arises whether the results can be extended to the case of quantum instruments, for which a similar concept of simulability was introduced recently \cite{Khandelwal_2025}. Second, although our work provides crucial numerical evidence that the flagged Hesse SIC is the maximally robust POVM in four dimensions, it is still left to prove this conjecture. Moreover, while the maximally non-projective POVM for qubits (i.e., the SIC-POVM) provides advantages over other POVMs in several tasks, it would be interesting to show the same for maximally non-projective POVMs in other dimensions.

	\begin{acknowledgments}
		We thank Ingemar Bengtsson, Markus Grassl and Joseph Renes for sharing their knowledge about SIC-POVMs and  David Gross, Tulja Varun Kondra, Lucas Tendick, Roope Uola, Malte Zehnpfennig, and Amro Abou-Hachem for fruitful discussions. This work is supported by the Swedish Research Council under Contract No. 2023-03498 and the Knut and Alice Wallenberg Foundation through the Wallenberg Center for Quantum Technology (WACQT). S.K.~acknowledges support from the Swiss
		National Science Foundation Grant No.~P500PT\_222265. R.B., H.K.~and D.B.~acknowledge support by Deutsche Forschungsgemeinschaft (DFG, German Research Foundation)
		under Germany’s Excellence Strategy -- Cluster of Excellence Matter and Light for Quantum Computing (ML4Q) EXC 2004/1 -- 390534769. N.W.~acknowledges support by EIN Quantum NRW. 
	\end{acknowledgments}
	
	\bibliography{references_manuscript}

	\appendix
	
	\section{Redundancy of post-processing in projective simulation}\label{AppProofPostprocessing}
	
	\noindent The projective simulation of a POVM $\mathbf E=\{E_a\}_a$ entails the following decomposition
	\begin{align}\label{eq:simeq}
		E_a = \sum_{\lambda} q_\lambda\sum_k p(a\lvert k,\lambda)P_{k\lvert \lambda} ,
	\end{align}
	where $\lambda$ is a random variable with probability distribution $q_\lambda$, $P_{k\lvert \lambda}$ is a projective measurement for each given $\lambda$  and $p(a\lvert k,\lambda)$ is a probabilistic post-processing rule. In this section, we show that the classical post-processing step in the above simulation is redundant.
	
	Let us first consider the case of deterministic post-processing, i.e., $p(a|k,\lambda) =\delta_{a,f(k,\lambda)} $, where $f$ is some function. Our simulation equation is then
	\begin{align}
		E_a = \sum_{\lambda} q_{\lambda}\sum_k P_{k\lvert \lambda} \delta_{a,f(k,\lambda)}.
	\end{align}We define
	\begin{align}
		F_{a\lvert \lambda} \coloneqq \sum_{k} P_{k\lvert\lambda} \delta_{a,f(k,\lambda)}.
	\end{align}Note that $F_{a\lvert \lambda} $ is implicitly dependent on the function $f$. It can be easily verified that $\{F_{a\lvert \lambda}\}_a$ is a projective measurement for any choice of $(\lambda,f)$. With this definition, we have 
	\begin{equation}
		\begin{aligned}
			E_a &= \sum_{\lambda}q_{\lambda} F_{a\lvert \lambda}
		\end{aligned}
	\end{equation}We have therefore eliminated the post-processing completely and have written the POVM as a stochastic implementation of the projectors $F_{a\lvert \lambda}$.

	Next we consider arbitrary post-processing. Note that such a case can be realised by the stochastic application of deterministic strategies, i.e., $p(a\lvert k,\lambda) = \sum_g  q_g \delta_{a,g(k,\lambda)}$, for some functions $g$. We can therefore write the simulation equation \eqref{eq:simeq} as
	\begin{equation}
		\begin{aligned}
			E_a &= \sum_{\lambda}q_{\lambda}\sum_{k,g} P_{k\lvert\lambda}\, q_g \delta_{a,g(k,\lambda)} \\&= \sum_{\lambda,g} q_{\lambda}q_g\sum_{k} P_{k\lvert\lambda}\delta_{a,g(k,\lambda)} \\ 
			& = \sum_{\lambda,g} q_{\lambda,g} G_{a\lvert \lambda,g},
		\end{aligned}
	\end{equation}where we have combined the two probability distributions and have defined $G_{a\lvert \lambda,g}\coloneqq\sum_{k} P_{k\lvert\lambda}\delta_{a,g(k,\lambda)}$. As before, $\{G_{a\lvert \lambda,g}\}_a$ is a projective measurement for any choice of $(\lambda,g)$. Defining a new random variable $\lambda'\coloneqq (\lambda,g)$, we have
	\begin{align}
		E_a = \sum_{\lambda'}q_{\lambda'}G_{a\lvert\lambda'}.
	\end{align}We have therefore again eliminated the post-processing completely, as above for the deterministic post-processing case.
	
	\qed

	\section{Dual SDP}\label{AppDual}
	In this appendix, we derive the dual formulation of the SDP introduced in the main text. In particular, we consider the case in which the SDP computes upper bounds on the projective simulability threshold visibility $v^{*}(\mathbf{E})$, i.e.
	\begin{equation}\label{SDP_primal_visibility}
		\begin{aligned}
			\max_{v,\{F_{a|\vec{r}}\}_{a,\vec{r}}} & \quad v\\
			\text{s.t.}&\quad v E_a+\frac{1-v}{d}\tr(E_a)\openone =  \sum_{\vec{r}} F_{a|\vec{r}},  \quad \forall a\\
			& \quad F_{a|\vec{r}} \succeq 0, \quad \forall a,\vec{r},\\
			& \quad \tr(F_{a|\vec{r}}) = q_{\vec{r}}\ r_{a}, \quad \forall a,\vec{r},\\
			& \quad \sum_{a} F_{a|\vec{r}} = q_{\vec{r}}\openone, \quad \forall \vec{r}.\\
		\end{aligned}
	\end{equation}
	
	Let us introduce a Lagrangian multiplier for each constraint,
	\begin{equation}\label{SDP_constraints}
		\begin{aligned}
			vE_a +\frac{1-v}{d}\tr(E_a)\openone=\sum_{\vec{r}} F_{a|\vec{r}} \quad &\rightarrow \quad \Gamma_{a}, \quad \forall a\\
			F_{a|\vec{r}} \succeq 0 \quad &\rightarrow \quad Z_{a|\vec{r}}, \quad \forall a,\vec{r},\\
			\tr(F_{a|\vec{r}}) = q_{\vec{r}}\, r_{a} \quad &\rightarrow \quad y_{a|\vec{r}}, \quad \forall a,\vec{r},\\
			\sum_{a} F_{a|\vec{r}} = q_{\vec{r}}\openone \quad &\rightarrow \quad Y_{\vec{r}}, \quad \forall \vec{r}\\
		\end{aligned}
	\end{equation}
	and write down the Lagrangian of the problem:
	\begin{equation}\label{Lagrangian_SDP}
		\begin{aligned}
			\mathcal{L} =& v + \sum_{a} \tr[\Gamma_{a}\left(vE_a +\frac{1-v}{d}\tr(E_a)\openone - \sum_{\vec{r}} F_{a|\vec{r}}\right)]\\
			& + \sum_{a,\vec{r}}\tr(Z_{a|\vec{r}}F_{a|\vec{r}}) + \sum_{a|\vec{r}} y_{a|\vec{r}}[\tr(F_{a|\vec{r}}) - q_{\vec{r}}r_{a}]\\
			& + \sum_{\vec{r}}\tr[Y_{\vec{r}}\left(\sum_{a}F_{a|\vec{r}} - q_{\vec{r}}\openone\right)].
		\end{aligned}
	\end{equation}
	The Lagrangian multipliers will be the variables of the dual problem. Notice that we must restrict  to $Z_{a|\vec{r}} \succeq 0$, since in this case the terms $\tr(Z_{a|\vec{r}}F_{a|\vec{r}})$ are always non-negative for any feasible solution of the primal and therefore the Lagrangian is never smaller than the optimal value of the primal objective function, i.e. $\mathcal{L} \geq v^{*}$.
	
	To make the Lagrangian independent of the primal variables $\lbrace F_{a|\vec{r}},q_{\vec{r}} \rbrace$, we can factorize \eqref{Lagrangian_SDP} as follows:
	\begin{equation}
		\begin{aligned}
			\mathcal{L} =& v\left[1 + \sum_{a}\tr(\Gamma_a E_a) - \frac{1}{d}\sum_{a}\tr(E_a)\tr(\Gamma_a)\right]\\
			& + \sum_{a,\vec{r}}\tr[F_{a|\vec{r}}\left(-\Gamma_a + Z_{a|\vec{r}} + Y_{\vec{r}} + y_{a|\vec{r}}\openone\right)]\\
			& + \sum_{\vec{r}} q_{\vec{r}}\left[-\sum_{a}y_{a|\vec{r}} r_{a} - \tr(Y_{\vec{r}})\right]\\
			& + \frac{1}{d} \sum_{a} \tr(\Gamma_{a})\tr(E_{a}).
		\end{aligned}
	\end{equation}
	Therefore, by imposing the following constraints on the dual variables
	\begin{equation}\label{Dual_constraints}
		\begin{aligned}
			& 1 + \sum_{a}\tr(\Gamma_a E_a) - \frac{1}{d}\sum_{a}\tr(E_a)\tr(\Gamma_a) = 0,\\
			& -\Gamma_a + Z_{a|\vec{r}} + Y_{\vec{r}} + y_{a|\vec{r}}\openone = 0,\\
			& -\sum_{a}y_{a|\vec{r}} r_{a} - \tr(Y_{\vec{r}}) = 0,
		\end{aligned}
	\end{equation}
	the Lagrangian does not present any dependence on the primal variables, i.e. $\mathcal{L} = \frac{1}{d} \sum_{a} \tr(\Gamma_{a})\tr(E_{a})$. Then, the dual problem reads as
	\begin{equation}\label{SDP_dual_visibility}
		\begin{aligned}
			\min_{\substack{\{\Gamma_a\},\{Z_{a|\vec{r}}\},\\ \{y_{a|\vec{r}}\},\{Y_{\vec{r}}\}}} & \quad \frac{1}{d} \sum_{a} \tr(\Gamma_{a})\tr(E_{a})\\
			\text{s.t.}&\quad Z_{a|\vec{r}} \succeq 0, \quad \forall a,\vec{r},\\
			& \quad 1 + \sum_{a}\tr(\Gamma_a E_a) - \frac{1}{d}\sum_{a}\tr(E_a)\tr(\Gamma_a) = 0,\\
			& \quad -\Gamma_a + Z_{a|\vec{r}} + Y_{\vec{r}} + y_{a|\vec{r}}\openone = 0, \quad \forall a,\vec{r},\\
			& \quad -\sum_{a}y_{a|\vec{r}} r_{a} - \tr(Y_{\vec{r}}) = 0. \quad \forall \vec{r}.\\
		\end{aligned}
	\end{equation}
	Note that the $\lbrace Z_{a|\vec{r}} \rbrace$ are slack variables since they do not appear in the objective function. Therefore, we can use the third constraint to solve for them. In the same way, we can rewrite the objective function solving the second constraint. This leads to a simplified version of the dual problem, i.e.
	\begin{equation}\label{SDP_dual_visibility_simplified}
		\begin{aligned}
			\min_{\substack{\{\Gamma_a\}, \{y_{a|\vec{r}}\},\\\{Y_{\vec{r}}\}}} & \quad 1 + \sum_{a}\tr(\Gamma_a E_a)\\
			\text{s.t.} & \quad 1 + \sum_{a}\tr(\Gamma_a E_a) - \frac{1}{d}\sum_{a}\tr(E_a)\tr(\Gamma_a) = 0,\\
			& \quad \Gamma_a - Y_{\vec{r}} - y_{a|\vec{r}}\openone \succeq 0, \quad \forall a,\vec{r},\\
			& \quad -\sum_{a}y_{a|\vec{r}} r_{a} - \tr(Y_{\vec{r}}) = 0. \quad \forall \vec{r}.\\
		\end{aligned}
	\end{equation}
	We can check that Slater's conditions hold for the primal SDP \eqref{SDP_primal_visibility}, i.e. (1) it is possible to find a \textit{strictly-feasible} solution to the problem, (2) the optimal objective is finite. Condition (2) follows from the fact that the first constraint bounds $v$ for every choice of $\mathbf{E}$ except $E_a \propto \openone \forall a$, which can be trivially seen to be simulable. Moreover, choosing a rank-vector $\vec{r^{*}} = (\tr(E_1),\dots,\tr(E_n))$ with $q(\vec{r^{*}}) = 1$ and $F_{a|\vec{r^{*}}} = \frac{\tr(E_a)}{d}\openone$ always gives a solution to the primal in terms of strictly positive $F_{a|\vec{r}}$. Therefore, also condition (1) is satisfied, and strong duality holds \cite{Skrzypczyk_2023}.
	
	The feasible solutions $\lbrace \Gamma_a \rbrace$ of the dual problem \eqref{SDP_dual_visibility_simplified} provide a witness to detect $\mathbf{E} \notin \mathcal{P}$. Indeed, we can define
	\begin{equation}
		W(\mathbf{E}) = \sum_{a} \tr(\Gamma_{a} E_{a}).
	\end{equation}
	From strong duality, we know that $v^{*}(\mathbf{E}) = 1 + \sum_{a} \tr(\Gamma_a E_a)$ for the optimal $\lbrace \Gamma_a \rbrace$. Therefore, for any noisy POVM $\Phi_{v}(\mathbf{E})$ for which $v \leq v^{*}(\mathbf{E})$, we get
	\begin{equation}
		\begin{aligned}
			W(\Phi_{v}(\mathbf{E})) &= v \sum_{a}\tr(\Gamma_a E_a) + \frac{1-v}{d} \sum_{a} \tr(E_{a}) \tr(\Gamma_a)\\
			&= v(v^{*}-1) + (1-v)v^{*} = v^{*} - v \geq 0.
		\end{aligned}
	\end{equation}
	This inequality is satisfied by all $\mathbf{E} \in \mathcal{P}$. Then, finding $W(\mathbf{E}) \ngeq 0$ implies that $\mathbf{E} \notin \mathcal{P}$.

	\section{Witness bounds}\label{AppEfficient}

	In the main text, we introduced a simple witness ansatz to falsify the simulability of a POVM. Consider a rank-1 POVM $\mathbf{\tilde{\mathbf{E}}}$ with elements $\tilde E_a = c_a\ketbra{\psi_a}{\psi_a}$. To show that an arbitrary POVM $\mathbf{E}\notin \mathcal{P}$, we consider a witness inspired by quantum state discrimination,
	\begin{align}\label{eq:witn}
		W_\mathbf{\tilde{E}}(\mathbf{E})=\frac{1}{d}\sum_{a=1}^n  \bracket{\psi_a}{E_a}{\psi_a}.
	\end{align}It can be easily checked that $W_\mathbf{\tilde{E}}(\mathbf{E})\leq 1$ and that the bound is saturated only for $\mathbf E= \mathbf{\tilde{\mathbf{E}}} $. To see if $\mathbf{{\mathbf{E}}} \in \mathcal P$, we obtain an upper bound  on $W_{\mathbf{\tilde{\mathbf{E}}}}$ over all projectively simulable implementations. This can be achieved by applying the SDP constraints \eqref{SDP_primal_visibility} and \eqref{eq:witn} as the objective. We denote the obtained maximum value with $\beta$. To determine $\beta$ in practice, one can leverage the linearity of the problem. For projectively simulable measurements, the witness is a convex combination over deterministic values associated with $\lambda = (\chi,\vec r)$. Due to its linearity, the maximum is achieved for some specific choice of $(\chi,\vec r)$.  We denote the upper bound of the witness for a specific choice of the rank vector as $\beta_{\vec r}$, corresponding to projective simulations considering only the specific rank vector $\vec r$. Each upper bound can be calculated with the witness as the objective using the SDP constraints \eqref{SDP_primal_visibility}. The overall upper bound on the witness is simply $\beta = \max_{\vec r}\beta_{\vec r}$.  More generally, instead of iterating over each rank vector individually, we may package rank vectors into mutually exclusive sets, $\mathcal R_j$. The bound $\beta_{\mathcal R_j}$ corresponding to each set can be calculated with the SDP
	\begin{equation}
		\begin{aligned}
			\max_{\lbrace F_{a|\vec{r}} \rbrace} &\quad \frac{1}{d}\sum_{a=1}^{n} \sum_{\vec{r} \in \mathcal R_j} \bracket{\psi_a}{F_{a|\vec{r}}}{\psi_a}\\
			\text{s.t.}& \quad\tr(F_{a|\vec{r}})=q_{\vec{r}}\ r_a \,\,\forall a, \vec r \in \mathcal R_j,\\
			&\quad \sum_a F_{a|\vec{r}}=q_{\vec{r}}\openone \,\,\forall\, \vec r \in \mathcal R_j, \text{ and } F_{a|\vec{r}}\succeq 0 \,\,\forall a, \vec r \in \mathcal R_j.
		\end{aligned}
	\end{equation}The overall bound is simply $\beta = \text{max}_{j} \beta_{\mathcal R_j}$. Obviously, if each $\mathcal R_j$ contains only one rank vector, we go back to case of going over each rank vector individually.
	
	To simulate SIC measurements, we consider rank vectors of length $N=d^2$, where $d$ is the dimension of the Hilbert space. There is a sharp rise in the number of rank vectors $N_r$ with dimension, with only 10 rank vectors in $d=2$ to more than 4 million rank vectors in $d=6$. Thus, it  quickly  becomes infeasible to solve the SDP individually for each rank vector and obviously to solve it over all rank vectors together. It is therefore useful to package rank vectors in sets of a manageable size, as described above. 
	
	The witness bound $\beta$ can give an upper bound $v_\beta (\mathbf E)$ on the visibility $v^*(\mathbf E)$ by setting $E_a = v_\beta \tilde E_a + \frac{1-v_\beta}{d}\text{tr}(\tilde E_a)\openone$,
	\begin{equation}
		\begin{aligned}
			\beta& = \frac{1}{d}\sum_a \bra{\psi_a} v_\beta\tilde E_a + \frac{1-v_\beta}{d}\text{tr}(\tilde E_a)\openone \ket{\psi_a}\\
			& = v_\beta + \frac{1-v_\beta}{d}.
		\end{aligned}
	\end{equation}

	In Table \ref{tab1}, we give the witness bounds obtained up to $d=6$. 
	
	\begin{center}
		\begin{table}[h!]
			\begin{tabular}{|c| c| c |c|}
				\hline
				Type  & $\beta$ &$v_\beta$\\
				\hline 
				2a$^*$ &$\frac{1}{\sqrt{6}}+\frac{1}{2}$ &  $\sqrt{\frac{2}{3}}$\\
				3a& 0.8821 &0.8232 \\
				3b (Norrell) &$\frac89$ &$\frac{5}{6}$\\
				3c (Hesse)$^*$ &$\frac{4}{9}(1+\cos(\frac{\pi}{9}))$ &$\frac{1}{6}(1+4\cos(\frac{\pi}{9}))$\\
				4a  & 0.8839 & 0.8452 \\
				5a  & 0.8669 & 0.8336 \\
				6a  &0.8655 &0.8386  \\
				\hline
			\end{tabular}\caption{The witness bound $\beta$ and the corresponding visibility $v_\beta$ obtained for SIC-POVMs up to $d=6$. The SIC-POVMs in all dimensions are denoted via the labeling used in \cite{Scott_2010}. The cases for which $v_\beta = v^*$ are denoted with a $*$.}
			\label{tab1}
		\end{table}
	\end{center}

	\section{Projective simulation of noisy Hesse SIC-POVM}\label{AppHesse}
	The decomposition of the noisy Hesse SIC given in the main text splits into 72 projective measurements with 3 effects each. One of the projectors takes the form 
	\begin{align}\label{eq:fidproj}
		\ket{\xi}=\frac{1}{3}\left(\ket{0}+2e^{-5\pi i/9}\ket{1}+2e^{5\pi i/9}\ket{2}\right).
	\end{align}
	The orbit of $\ket{\xi}$ under the group generated by 
	\begin{align}
		X&=\sum_j \ketbra{j\oplus1}{j}, \nonumber \\
		V&=\exp\left(\frac{2\pi i}{3} \ketbra{0}{0}\right), \nonumber \\
		W&=\exp\left(\frac{2\pi i}{3}  \ketbra{\nu}{\nu}\right), \label{eq:v_and_w}
	\end{align}
	where $\ket{\nu}=\frac{1}{\sqrt{3}}\left(\ket{0}+e^{2\pi i/3}\ket{1}+e^{2\pi i/3}\ket{2}\right)$, form the remaining projectors. This results in 216 projectors, which can be divided in 72 sets of three orthogonal states, which form the projective measurements. In the decomposition of the noisy Hesse SIC, each measurement appears with the same probability $q_\lambda = \frac{1}{72}$. $V$ and $W$ actually generate a representation of the group $\mathrm{SL}(2,3) = \langle V,W\rangle$ with 24 elements. Every element in that group leaves the Hesse fiducial vector $\ket{\varphi^{(0)}}$ invariant, and therefore belongs to the stabiliser. The noisy version of the fiducial vector only consists of projectors in the orbit of $\ket{\xi}$ under action of $\mathrm{SL}(2,3)$
	\begin{align}\label{eq:noisy_hesse_fiducial}
		\frac{v}{3}\ketbr{\varphi^{(0)}} + \frac{(1-v)}{9} \openone = \sum_{s\in \mathrm{SL(2,3)}} s \ketbra{\xi}{\xi}s^\dagger .
	\end{align}
	Equation \eqref{eq:noisy_hesse_fiducial} allows one to analytically compute the visibility $v=\frac{1}{6}\left(1+4\cos\left(\frac{\pi}{9}\right)\right)$. The other projectors that make up the remaining POVM elements can be formed by Weyl-Heisenberg operators $\langle X,Z \rangle$. Therefore, each of the 216 projectors is uniquely determined by an element $(s,X^{a_0}Z^{a_1})$, with $s\in \mathrm{SL(2,3)}$.
	
	Furthermore, $V$ and $W$ together with $X$ generate the complete projective Clifford group $\mathrm{PC}(3) \cong \mathrm{SL}(2,3) \ltimes (\mathbb{Z}_3 \times \mathbb{Z}_3)$. The deep connection between SIC-POVMs and the (extended) Clifford group was already found and discussed in Refs.~\cite{Appleby_2005,Scott_2010}. The reason we find this symmetry in the orbit spanned by the projectors might be related to the fact, that the automorphism group of the Hesse SIC is the whole extended Clifford group. For the case of worst-case noise that is introduced in Section~\ref{AppWorstCase}, we observe the same behaviour for the decomposition of the noisy Norrell SIC, where the projectors form an orbit under action of the corresponding automorphism group of the Norrell SIC. Generalising this observation might lead to easier ways of constructing the decomposition of noisy SIC-POVMs and finding their automorphism groups.

	\section{Simulable decompositions and generalisations of flagged SIC-POVMs}\label{AppFlaggs}
	
	\subsection{Flagged SIC2}\label{AppFlaggs2}
	In this section, we construct an analytical decomposition of a noisy version of the flagged SIC2, $\mathbf{E}_{\text{fSIC2}}$, in terms of projective measurements. We generalise the construction for an arbitrary number of dimension $3\leq d \leq 14$ by adding $d-2$ flag effects. The corresponding visibility parameter $v$ yields then a proper lower bound to the simulability threshold. We find that this lower bound matches (up to numerical precision) the upper bound we get from the SDP, establishing that our construction is optimal and the extracted visibilities coincide with the simulability threshold.  
	
	Our model extends the usual two-dimensional SIC-POVM constructed from the fiducial vector $\ket{\varphi_2} = \sqrt{(3 + \sqrt{3})/6}\ket{0} +  e^{-i\pi/4}\sqrt{(3 - \sqrt{3})/6} \ket{1}$ and the corresponding SIC-POVM $\mathbf{E}_\text{SIC2}$ with the four effects $\{E_1^{\text{SIC2}},E_2^{\text{SIC2}},E_3^{\text{SIC2}},E_4^{\text{SIC2}}\}$. In dimension $d$, we define the effects of the flagged SIC2 to consist of the $d+2$ effects
	\begin{align}
		E_i &= E_i^{\text{SIC2}} \oplus 0_{d-2}\quad \text{for }i=1,2,3,4,\\
		E_k &= \ketbr{k-2}\quad \text{for }k=5,\ldots,d+2. \label{eq:E_k}
	\end{align}
	where $0_{l}$ denotes the $l\times l$-dimensional zero matrix. We observe, that coarse graining the last $d-2$ effects \eqref{eq:E_k} to one POVM element does not change the critical visibility computed below (at least in all dimensions we checked numerically). Nonetheless, we look at the more general case above, from which one can always compute the simulation of the coarse grained version.
	
	We start by decomposing the (noisy) POVM in terms of six $d$-dimensional $d$-effect POVMs with unit trace effects. Later, we further decompose these measurements into projective measurements to show that they are projectively simulable.
	
	Inspecting the numerical solution of the SDP suggests an ansatz of six measurements  $\mathbf{F}^{(ij)}$, acting on POVM elements $(i,j,5,6,\ldots,d+2)$ for $1\leq i < j \leq 4$. 
	
	In particular, we choose the ansatz
	\begin{align}
		\begin{pmatrix}
			E_1\\ E_2 \\ E_3 \\ E_4 \\ \hline E_5 \\ \vdots \\ E_{d+2}
		\end{pmatrix} = \frac16\begin{pmatrix}
			F^{(12)}_1 \\ F^{(12)}_2 \\ 0 \\ 0 \\ \hline F^{(12)}_5 \\ \vdots \\ F^{(12)}_{d+2} 
		\end{pmatrix} + \frac16\begin{pmatrix}
			F^{(13)}_1 \\ 0 \\  F^{(13)}_3 \\ 0 \\ \hline F^{(13)}_5 \\ \vdots \\ F^{(13)}_{d+2} 
		\end{pmatrix} 
		+ \ldots + \frac16\begin{pmatrix}
			0 \\ 0 \\F^{(34)}_3 \\ F^{(34)}_4 \\ \hline F^{(34)}_5 \\ \vdots \\ F^{(34)}_{d+2} 
		\end{pmatrix}, 
	\end{align}
	where each of the $\mathbf{F}^{(ij)}$ is a $d$-effect POVM with $\tr(F_k^{(ij)}) = 1$. The horizontal lines are introduced to visually distinguish the effects of the SIC2 from the flags. 
	
	Next, we define the vector
	\begin{align}
		\ket{\theta} = \cos(\theta)\ket{0} + e^{-i\pi/4} \sin(\theta)\ket{1}
	\end{align}
	and choose
	\begin{align}
		F_1^{(12)} &= \lambda \ketbr{\theta} & & + \epsilon (\ketbr{2} + \ldots + \ketbr{d-1}), \nonumber\\
		F_2^{(12)} &= \lambda \ketbr{-\theta} & &+ \epsilon (\ketbr{2} + \ldots + \ketbr{d-1}),\nonumber \\
		F_5^{(12)} &= 2\epsilon \ketbr{1} & &+ \kappa \ketbr{2} + \epsilon (\ketbr{3} + \ketbr{4} + \ldots \nonumber \\
		& & & \pushright{ + \ketbr{d-1}),}\nonumber \\
		F_6^{(12)} &= 2\epsilon \ketbr{1} & &+ \kappa \ketbr{3} + \epsilon (\ketbr{2} + \ketbr{4} + \ldots \nonumber \\
		& & & \pushright{ + \ketbr{d-1}),}\nonumber \\
		\vdots \nonumber\\
		F_{d+2}^{(12)} &= 2\epsilon \ketbr{1} & &+ \kappa \ketbr{d-1} + \epsilon (\ketbr{2} + \ldots \nonumber \\ 
		& & & \pushright{ + \ketbr{d-2}).~~~~~} \label{eq:F12effects}
	\end{align}
	The other five measurements are obtained from this measurement by acting with elements of the automorphism group of SIC2, generated by $U = \exp(2\pi i/3 \ketbr{\varphi_2}) \oplus \openone_{d-2}$ and the Pauli matrices, e.g.,
	$F_1^{(13)} = U F_1^{(12)} U^\dagger$, $F_5^{(13)} =  U F_5^{(12)} U^\dagger$, $F_6^{(13)} =  U F_6^{(12)} U^\dagger$, $\ldots$ 
	and within the measurements according to $F_2^{(12)} = \sigma_Z F_1^{(12)} \sigma_Z$, $F_3^{(13)} = \sigma_X F_1^{(13)} \sigma_X$, 
	$F_4^{(14)} = \sigma_Y F_1^{(14)} \sigma_Y$, etc., where the Pauli matrices are to be understood to act on the first two dimensions only.
	
	Let us now impose that our ansatz constitutes a proper decomposition of the flagged SIC2.
	Imposing $\sum_k F_k^{(ij)} = \openone$ yields a first set of conditions,
	\begin{align}
		2\lambda\cos^2(\theta) = 1,\label{eq:fsiccond1_1}\\
		2(d-2)\epsilon + 2\lambda\sin^2(\theta) = 1,\\
		(d-1)\epsilon + \kappa = 1.
	\end{align}
	Imposing $1/6 \sum_{i<j} F_k^{(ij)}  = vE_k + (1-v)\tr(E_k)\openone/d$ yields for $k=1\ldots 4$ the second set of conditions
	\begin{align}
		\lambda(3 + \cos(2\theta) + 
		\sqrt{2}  \sin(2\theta)) &= (3 + \sqrt{3}) v + \frac6d(1-v),\\
		\lambda(3 - \cos(2\theta) - 
		\sqrt{2}  \sin(2\theta)) &= (3 - \sqrt{3}) v + \frac6d(1-v),\\
		\lambda (\cos(2 \theta) + \sqrt{2} \sin(2 \theta)) &= \sqrt{3} v,\\
		\epsilon &= \frac{1-v}d.
	\end{align}
	Observe that the third line follows from the first two and is therefore linearly dependent.
	
	The same condition evaluated for any $k\geq 5$ yields finally
	\begin{align}
		\epsilon &= \frac{1-v}d,\\
		\kappa &= \frac{(d-1)v + 1}d.
	\end{align}
	However, we have seen both of these conditions before (the second one follows from the third line of the first set of conditions). The other equations fix everything and we obtain as a solution
	\begin{align}
		\epsilon = \frac{1}{4+d+(d-2)\sqrt{3}+2\sqrt{6+2\sqrt{3}(d-2)}}.
	\end{align}
	
	From this, we get $v$ via  
	\begin{align}
		v &= 1-d\epsilon \nonumber \\
		&= \frac{\sqrt{3} d +2 \sqrt{2 \sqrt{3} d -4 \sqrt{3}+6}-2 \sqrt{3}+4}{d + \sqrt{3} d  +2 \sqrt{2 \sqrt{3} d -4 \sqrt{3}+6}-2 \sqrt{3}+4}. \label{eq:visibilityfSIC2}
	\end{align}
	We can also obtain the remaining parameters easily. For instance, using the first line of the first set of conditions to eliminate $\lambda$ in the second line yields a formula for $\tan^2(\theta)$. With this, we can easily confirm that $\tr(F^{(ij)}_k) = 1$.
	
	Note that the expression for $v$ recovers the correct value of $\sqrt{2/3}$ for  $d=2$. For $d=3$, we know that each $d$-dimensional $d$-effect POVM with unit trace effects is projectively simulable \cite{Oszmaniec2017} and we get the visibility of $v \approx 0.798$, which implies (together with the corresponding upper bound from the SDP that confirms that our construction is optimal) that it is more robust than the Norrell SIC but less robust than the Hesse SIC. 
	
	For $d\geq 4$, we still have to show that the $\mathbf{F}^{(ij)}$ are simulable. To that end, we explicitly derive a projective simulation for $\mathbf{F}^{(12)}$ for dimensions $3\leq d \leq 14$. Due to the unitary symmetries, this model can then be directly transformed into simulations of the other measurements.
	
	Consider the measurement $\mathbf{F}^{(12)}$ with its $d$ effects given in Eq.~\eqref{eq:F12effects}. We decompose this POVM in terms of $2^{d-2}$ projective measurements $\mathbf{P}^{(s_2,\ldots,s_{d-1})}$ where $s_i \in \{-1,1\}$. For each fixed choice of signs $s_2,\ldots,s_{d-1}$, we choose the effects of the corresponding projective measurement as
	\begin{align}
		P^{(s_2,\ldots,s_{d-1})}_1 &= \ketbr{\pi_1}\nonumber \\
		\text{ with }\ket{\pi_1} &= \sqrt{\lambda}\ket{\theta} + \sqrt{\epsilon}(s_2\ket{2} + \ldots +s_{d-1}\ket{d-1}),\nonumber \\
		P^{(s_2,\ldots,s_{d-1})}_2 &= \ketbr{\pi_2}\nonumber \\
		\text{ with }\ket{\pi_2} &= \sqrt{\lambda}\ket{-\theta} - \sqrt{\epsilon}(s_2\ket{2} + \ldots+ s_{d-1}\ket{d-1}),\nonumber \\
		P^{(s_2,\ldots,s_{d-1})}_5 &= \ketbr{\pi_5}\nonumber \\ 
		\text{ with }\ket{\pi_5} &= \sqrt{2\epsilon}\ket{1} + \sqrt{\kappa}e^{i\alpha}s_2\ket{2}  \nonumber \\
		& \phantom{=}+\sqrt{\epsilon}e^{i\beta}(s_3\ket{3} + \ldots+ s_{d-1}\ket{d-1}),\nonumber \\
		\vdots&\nonumber \\
		P^{(s_2,\ldots,s_{d-1})}_{d+2} &= \ketbr{\pi_{d+2}}\nonumber \\
		\text{ with }\ket{\pi_{d+2}} &= \sqrt{2\epsilon}\ket{1} + \sqrt{\kappa}e^{i\alpha}s_{d-1}\ket{d-1} \nonumber \\
		&\phantom{.} + \sqrt{\epsilon}e^{i\beta}(s_2\ket{2} + \ldots+ s_{d-2}\ket{d-2}).
	\end{align}
	Considering the convex combination $\frac1{2^{d-2}} \sum_{s_2,\ldots,s_{d-1}} \mathbf{P}^{(s_2,\ldots,s_{d-1})}$ yields then the corresponding effects of $\mathbf{F}^{(12)}$. Note that we consider all possible signs to have the off-diagonal terms cancel each other.
	
	Finally, we have to ensure that the $\ket{\pi_i}$ are orthogonal. These conditions are independent of the choice of signs and reduce to the three independent conditions
	\begin{align}
		0=\braket{\pi_1}{\pi_2} &= \lambda \cos(2\theta) - (d-2)\epsilon,\\
		0=\braket{\pi_1}{\pi_5} &= \sqrt{\xi - d +3 }e^{i\pi/4} + \sqrt{\xi}e^{i\alpha} + (d-3)e^{i\beta},\\
		0=\braket{\pi_5}{\pi_6}&=2\sqrt{\xi}\cos(\alpha - \beta) + d - 2,
	\end{align}
	where $\xi = \frac{1}{\epsilon} - d + 1 = 5 + \sqrt{3}(d-2) + 2\sqrt{6+2\sqrt{3}(d-2)}$.
	The first of these conditions follows automatically from the trace condition $\lambda +(d-2)\epsilon = 1$ together with Eq.~\eqref{eq:fsiccond1_1}. The second condition can be fulfilled for suitable choices of $\alpha$ and $\beta$ as long as the sum of any two of the three magnitudes of the complex numbers exceeds the third. This, however, only works up to dimensions $d=14$, and one can check that for each dimension $d\in \{3,\ldots,14\}$, the solutions to the second condition also fulfil the third. This shows that our model yields a proper decomposition for dimensions $3\leq d \leq 14$, establishing a lower bound on the visibility threshold. Using the SDP method, we find matching upper bounds for these cases confirming that the simulability threshold of $\mathbf{E}_{\text{fSIC2}}$ is given by Eq.~\eqref{eq:visibilityfSIC2} at least up to dimension $d\leq 14$. In particular, for $d=4$, this threshold value of $v^*\approx 0.7856$ is below the simulability threshold of the four-dimensional SIC-POVM of $v^*(\mathbf{E}_{\text{SIC4}}) \approx 0.826$.
	
	In dimensions $d\leq 16$ where we ran the SDP, we know that \eqref{eq:visibilityfSIC2} is an upper bound on $v_d$, i.e. the visibility corresponding to the most noise-robust POVM. We conjecture this to hold true in all dimensions. In the limit of large dimensions (even though we have no projective model in these cases), our value of $v$ approaches $v
	\longrightarrow \frac{1}{2} \left(3-\sqrt{3}\right) \approx 0.6340$.

	\subsection{Flagged Hesse SIC}\label{AppFlaggs3}
	
	Here, we construct a simulation of the flagged Hesse SIC. We follow the same approach as in the case of the flagged SIC2, i.e., we fix the dimension $d$ and add $d-3$ flag effects, such that we obtain the $d+6$ effects
	\begin{align}
		E_i &= E_i^{\text{Hesse}} \oplus 0_{d-3}\quad \text{for }i=1,\ldots9,\\
		E_k &= \ketbr{k-7}\quad \text{for }k=10,\ldots,d+6. \label{eq:E_k_fsic3}
	\end{align}

	We start again by decomposing the noisy version of this POVM in terms of $d$-dimensional $d$-effect POVMs with unit trace effects before we further decompose these measurements into projective measurements.
	
	Inspecting the numerical solution of the SDP suggests an ansatz of 72 measurements  $\mathbf{F}^{(ijk)}$ with non-zero effects $F_l^{(ijk)}$ for $l\in \{i,j,k,10,11,\ldots,d+6\}$. Here, $1\leq i<j<k\leq 9$ are chosen such that the corresponding measurement is present in the decomposition of the unflagged Hesse SIC as explained in Section~\ref{AppHesse}. Let us denote the set of size 72 of $(i,j,k)$ corresponding to a measurement by $\mathcal{S}$, such that $E_a = \frac1{72} \sum_{(i,j,k)\in\mathcal{S}}F_a^{(ijk)}$.  Thus, each of the $\mathbf{F}^{(ijk)}$ is selected with the same probability of $1/72$.
	
	Next, we define the vector
	\begin{align}
		\ket{\theta,\phi} = \cos(\theta)\ket{0} + \sin(\theta) e^{i\phi}/\sqrt{2} (\ket{1} + \ket{2}).
	\end{align}
	The fiducial projector $\ket{\xi}$ in Eq.~\eqref{eq:fidproj} was selected from measurement $i,j,k = 1,6,9$. Thus, we start with an ansatz or $F^{(169)}$ with the effects 
	\begin{align}
		F_1^{(169)} &= \lambda \ketbr{\theta,\phi}  + \epsilon (\ketbr{3} + \ldots + \ketbr{d-1}), \nonumber\\
		F_6^{(169)} &= \lambda h_6V\ketbr{\theta,\phi}V^\dagger h_6^\dagger + \epsilon (\ketbr{3} + \ldots \nonumber \\
		&  \pushright{+ \ketbr{d-1})},\nonumber \\
		F_9^{(169)} &= \lambda h_9V^2\ketbr{\theta,\phi}(V^\dagger)^2 h_9^\dagger + \epsilon (\ketbr{3} + \ldots \nonumber \\
		&  \pushright{+ \ketbr{d-1})},\nonumber \\
		F_{10}^{(169)} &= 3\epsilon \ketbr{\alpha} + \kappa \ketbr{3} + \epsilon (\ketbr{4} + \ketbr{5} + \ldots \nonumber \\
		&  \pushright{+ \ketbr{d-1})},\nonumber \\
		\vdots \nonumber\\
		F_{d+6}^{(169)} &= 3\epsilon \ketbr{\alpha} + \kappa \ketbr{d-1} + \epsilon (\ketbr{3} + \ldots \nonumber \\ 
		&  \pushright{+ \ketbr{d-2}),~~~~~} \label{eq:F169effects}
	\end{align}
	and all other effects being zero.
	Here, $h_6 = XZ^2$ and $h_9=X^2Z^2$ denote the corresponding Weyl-Heisenberg matrices, $V = \exp\left(\frac{2\pi i}{3} \ketbra{0}{0}\right)$ is one of the generators of the symmetry group of the decomposition of the Hesse SIC defined in Eq.~\eqref{eq:v_and_w} and $\ket{\alpha} = \frac1{\sqrt{3}}(\ket{0} + e^{-8\pi i/9}\ket{1} + e^{8\pi i/9}\ket{2}).$
	The other 71 measurements are obtained from this measurement by acting with the rest of the symmetry group (acting trivially on the flag dimensions). 
	
	Imposing $\sum_l F_l^{(169)} = \openone$ yields the conditions
	\begin{align}
		1 &= (d-3)\epsilon + \lambda,\label{eq:fsic3cond1_1}\\
		0 &= 2(d-3)\epsilon e^{8\pi i/9} + 2\sqrt{2}\lambda c_\theta s_\theta e^{-i\phi} + \lambda s_\theta^2 e^{2\pi i/3+2i\phi},\label{eq:fsic3cond1_2}\\
		1 &= (d-1)\epsilon + \kappa. \label{eq:fsic3cond1_3}
	\end{align}
	Here, we introduced the abbreviations $c_\theta = \cos(\theta)$, $s_\theta = \sin(\theta)$.
	
	Imposing $1/72 \sum_{(i,j,k)\in \mathcal{S}} F_l^{(ijk)}  = vE_l + (1-v)\tr(E_l)\openone/d$ yields for $l=1\ldots 9$ the conditions
	\begin{align}
		v &= \lambda \frac{3s_\theta^2 s_\phi^2 - 1}{2},\label{eq:fsic3cond2_1}\\
		v &= 1-d\epsilon. \label{eq:fsic3cond2_2}
	\end{align}
	
	Finally, we evaluate the sum for $l\geq 10$. Here, the vector $\ket{\alpha}$ is chosen such that its orbit under action of the 216 elements of the projective Clifford group $\text{Cl}$ generated from $V$, $W$ and the shift matrix $X$ given in Eq.~\eqref{eq:v_and_w}, contains only 72 different vectors (one for each measurement). However, as the Clifford group forms a unitary three-design \cite{Webb_2016}, it follows that 
	\begin{align}
		\frac{1}{216} \sum_{c\in \text{Cl}} c \ketbr{\alpha} c^\dagger = \frac13 (\ketbr{0} + \ketbr{1} + \ketbr{2}).  
	\end{align}
	Thus, choosing for instance $l=10$, we obtain 
	\begin{align}
		\frac1{72} \sum_{(i,j,k) \in \mathcal{S}} F_{10}^{(ijk)} = \epsilon(\ketbr{0} + \ketbr{1} + \ketbr{2})  \nonumber \\
		\pushright{+ \kappa\ketbr{3} + \epsilon (\ketbr{4} + \ldots + \ketbr{d-1}).}
	\end{align}
	Setting this equal to $v \ketbr{4} + \frac{1-v}{d}\openone$ yields no new conditions.
	
	Combining Eqs.~\eqref{eq:fsic3cond1_1}, \eqref{eq:fsic3cond2_1}   and \eqref{eq:fsic3cond2_2}, we obtain an expression for $v$ which solely depends on the angles $\theta$ and $\phi$:
	\begin{align}
		v = \frac{3}{\frac{2d}{3s_\theta^2 s_\phi^2 - 1} - (d-3)}. 
	\end{align}
	From this, we can also express $\epsilon = (1-v)/d$, $\lambda=1-(d-3)\epsilon$ and $\kappa = 1-(d-1)\epsilon$ in terms of $\theta$ and $\phi$. Inserting this into the condition of Eq.~\eqref{eq:fsic3cond1_2} yields a consistency equation for $\theta$ and $\phi$, namely
	\begin{align}
		(d-3)  \left(1 - s_\theta^2 s_\phi^2\right)e^{8\pi i/9 + i \phi} +   s_\theta^2e^{2\pi i/3 + 3 i \phi} + 2 \sqrt{2} c_\theta s_\theta = 0.
	\end{align}
	While we are not able to find closed form solutions, this consistency equation seems to have finitely many solutions in dimension $d\geq 4$, which we can obtain easily numerically up to arbitrary precision. We select the solution which yields the highest visibility and obtain from this numbers matching perfectly the upper bound from the SDP (we checked this up to dimension $d=9$). In particular, we obtain in $d=4$, $v_4 \approx 0.78233002$, in $d=5$, $v_5 \approx 0.77339360$ and in $d=6$, $v_6 \approx 0.76576302$.
	
	Due to the similarity of the decomposition of the flagged Hesse SIC to that of the flagged SIC2, we can choose a completely analogous ansatz for a further decomposition of each of the $\mathbf{F}^{(ijk)}$ in terms of $2^{d-3}$ projective measurements and we checked numerically that this yields a complete projective simulation of the flagged Hesse SIC matching the SDP upper bound at least in dimensions $4\leq d \leq 9$ (where in $d=9$, we otain $v_9 \approx 0.74762609$). Thus, the reported values of $v$ for the flagged Hesse SIC in these dimensions are known to be tight and correspond to the projective simulability thresholds.
	
	\section{Worst case noise}\label{AppWorstCase}
	In this section, we further discuss the problem of quantifying the non-projectivity of a POVM $\mathbf{E}$ subject to the worst-case noise model.
	Analogous the the depolarising noise, we define the noise map
	\begin{align}
		\tilde\Phi_{v}(X,X_\text{noise}) = vX + (1-v)X_\text{noise}
	\end{align}
	and the worst-case visibility threshold as
	\begin{equation}
		\tilde v^*(\mathbf{E})=\max_{\mathbf{E}_\text{noise}}  \{ v| \{\tilde\Phi_v(E_a, E^a_{\text{noise}})\}_a\in \mathcal{P}\},
	\end{equation}
	where we additionally maximize over all POVMs $\mathbf{E}_\text{noise} = \{E^a_{\text{noise}}\}_{a=1}^n$.
	Choosing $E^a_{\text{noise}} = \tr(E_a)\openone /d$ shows that $v^*(\mathbf{E})\leq \tilde v^*(\mathbf{E})$.
	
	What is more, in contrast to the usual visibility threshold, the worst-case visibility threshold cannot decrease when adding trivial flags to it:
	
	\begin{lemma} \label{lem:flagging_worstcase}
		Let $\mathbf{E} = \{E_a\}_{a=1}^n$ be a $d$-dimensional POVM and $\mathbf{E}_f = \{E^a_f\}_{a=1}^{n+1}$ the $(d+1)$-dimensional flagged POVM with effects
		\begin{align}
			E^a_f = \begin{cases} E_a \oplus 0 & \text{for }a=1,\ldots,n, \\ 0_d \oplus 1 & \text{for }a = n+1\end{cases},
		\end{align}
		where $0_d$ denotes the $d\times d$-dimensional zero matrix.
		Then $\tilde v^*(\mathbf{E}_f) \geq \tilde v^*(\mathbf{E})$.
	\end{lemma}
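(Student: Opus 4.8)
The plan is to prove the inequality constructively: starting from an optimal worst-case projective simulation of the (noisy) unflagged POVM $\mathbf{E}$, I would build an explicit worst-case projective simulation of the flagged POVM $\mathbf{E}_f$ that achieves at least the same visibility. Since $\tilde v^*(\mathbf{E}_f)$ is by definition a maximum over admissible noise POVMs, exhibiting a single such simulation at visibility $v^*:=\tilde v^*(\mathbf{E})$ immediately gives $\tilde v^*(\mathbf{E}_f)\geq \tilde v^*(\mathbf{E})$. (If the maximum defining $\tilde v^*(\mathbf{E})$ is not attained, the same argument applied to every achievable $v<\tilde v^*(\mathbf{E})$ yields the claim in the limit, so the distinction is immaterial.)

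For the construction, first I would invoke the assumed optimum for $\mathbf{E}$: there exist a noise POVM $\mathbf{E}_\text{noise}=\{E^a_\text{noise}\}_{a=1}^n$ on $\mathbb{C}^d$ and, by Eq.~\eqref{PVMsim}, projective measurements $\{P_{a|\lambda}\}_a$ on $\mathbb{C}^d$ with weights $\{q_\lambda\}$, $\sum_\lambda q_\lambda=1$, such that
\[
v^*E_a+(1-v^*)E^a_\text{noise}=\sum_\lambda q_\lambda P_{a|\lambda},\qquad a=1,\ldots,n.
\]
I would then lift this data to $\mathbb{C}^{d+1}$. For each $\lambda$, set $\tilde P_{a|\lambda}=P_{a|\lambda}\oplus 0$ for $a\leq n$ and append the flag projector $\tilde P_{n+1|\lambda}=0_d\oplus 1$; since $\sum_{a=1}^n P_{a|\lambda}=\openone_d$, the $\tilde P_{a|\lambda}$ are mutually orthogonal projectors summing to $\openone_{d+1}$, so $\{\tilde P_{a|\lambda}\}_{a=1}^{n+1}$ is again a bona fide projective measurement. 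Crucially, I would now choose the noise POVM for the flagged problem as $E^a_{f,\text{noise}}=E^a_\text{noise}\oplus 0$ for $a\leq n$ and $E^{n+1}_{f,\text{noise}}=0_d\oplus 1$; using $\sum_a E^a_\text{noise}=\openone_d$, these effects are positive and sum to $\openone_{d+1}$, hence $\mathbf{E}_{f,\text{noise}}$ is a valid noise POVM.

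It then remains to verify that $\sum_\lambda q_\lambda \tilde P_{a|\lambda}$ equals $v^*E^a_f+(1-v^*)E^a_{f,\text{noise}}$ for every $a$. For $a\leq n$ this is the direct sum of the unflagged identity with $0$. For the flag outcome, $\sum_\lambda q_\lambda\tilde P_{n+1|\lambda}=(\sum_\lambda q_\lambda)(0_d\oplus 1)=0_d\oplus 1$, which matches $v^*E^{n+1}_f+(1-v^*)E^{n+1}_{f,\text{noise}}$ precisely because the flag noise was set equal to the flag effect itself. This exhibits $\{v^*E^a_f+(1-v^*)E^a_{f,\text{noise}}\}_a\in\mathcal{P}$, finishing the argument.

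As for the main obstacle: the construction is routine, and the only genuinely load-bearing point is conceptual rather than computational. One must exploit the freedom, available \emph{only} under worst-case noise, to tailor the noise POVM---here, to place all the flag-outcome noise back onto the flag effect itself, so that no noise leaks into the embedded $d$-dimensional subspace. This is exactly the ingredient that is unavailable for depolarising noise, where the noise is rigidly fixed to $\tr(\cdot)\openone/(d+1)$ and therefore spills across the flag boundary into the simulated subspace; that leakage is precisely what permits the counterintuitive drop in threshold visibility observed for $\mathbf{E}_\text{fSIC2}$ and $\mathbf{E}_\text{fSIC3}$, and its absence here is what makes the lemma hold.
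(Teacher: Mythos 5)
Your proposal is correct and takes essentially the same route as the paper's own proof: both lift the optimal worst-case noise to $d+1$ dimensions by setting the flag outcome's noise effect equal to the flag effect $0_d\oplus 1$ itself, and append the trivial projector $\ketbr{d}$ to every projective measurement in the optimal decomposition of $\tilde v^*\mathbf{E}+(1-\tilde v^*)\mathbf{E}^*_{\text{noise}}$, so that the flag outcome is reproduced exactly and the embedded subspace is untouched. Your additional remarks (handling non-attainment of the maximum by a limiting argument, and the contrast with depolarising noise where the noise term spills across the flag boundary) are sound but go beyond what the paper's proof states explicitly.
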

	\begin{proof}
		Let $\tilde v^*$ be the worst-case visibility threshold of $\mathbf{E}$ and $\mathbf{E}^*_\text{noise}=\{F_a\}_{a=1}^n$ be the corresponding worst-case noise, such that $\tilde v^* \mathbf{E} + (1-\tilde v^*)\mathbf{E}^*_{\text{noise}} \in \mathcal{P}$. We define the $(d+1)$-dimensional POVM $\mathbf{E}^*_{\text{noise},f}$ via its $n+1$ effects
		\begin{align}
			F_f^a = \begin{cases} F_a \oplus 0 & \text{for }a=1,\ldots,n, \\ 0_d \oplus 1 & \text{for }a = n+1\end{cases}.
		\end{align}
		Note that the last effect of POVMs $\mathbf{E}_f$ and $\mathbf{E}^*_{\text{noise},f}$ match, such that $\tilde v^* E^{n+1}_f + (1-\tilde v^*)F^{n+1}_f = 0_d \oplus 1$. Thus, we can take a projective decomposition of $\tilde v^* \mathbf{E} + (1-\tilde v^*)\mathbf{E}^*_{\text{noise}}$ and append to each projective measurement the trivial projector $\ketbr{d}$, yielding a projective decomposition of $\tilde v^* \mathbf{E}_f + (1-\tilde v^*)\mathbf{E}^*_{\text{noise},f}$. Thus, $\tilde v^*$ is contained in the set that is maximized over to obtain $\tilde v^*(\mathbf{E}_f)$, which yields the claim.
	\end{proof}
	
	We can modify our SDP to calculate upper bounds on the worst-case visibility threshold instead. It reads as
	\begin{equation}\label{SDP_primal_visibility_worst_case}
		\begin{aligned}
			\max_{\substack{v,\{F_{a|\vec{r}}\},\\\{E^{a}_{\text{noise}}\}}} & \quad v\\
			\text{s.t.}&\quad v E_a+ (1-v) E^{a}_{\text{noise}} =  \sum_{\vec{r}} F_{a|\vec{r}}, \quad \forall a,\\
			& \quad F_{a|\vec{r}} \succeq 0, \quad \forall a,\vec{r},\\
			& \quad \tr(F_{a|\vec{r}}) = q_{\vec{r}}\ r_{a}, \quad \forall a,\vec{r},\\
			& \quad \sum_{a} F_{a|\vec{r}} = q_{\vec{r}}\openone, \quad \forall \vec{r},\\
			& \quad E^{a}_{\text{noise}} \succeq 0, \quad \forall a,\\
			& \quad \sum_{a}E^{a}_{\text{noise}} = (1-v)\openone,\\
		\end{aligned}
	\end{equation}
	where $\mathbf{E}_\text{noise} = \lbrace E^{a}_{\text{noise}} \rbrace_{a}$ is an arbitrary POVM that denotes the ``worst'' noise.
	
	Following the same approach presented in Section \ref{AppDual}, we find the following dual problem,
	\begin{equation}\label{SDP_dual_visibility_simplified_worst_case}
		\begin{aligned}
			\min_{\substack{\{\Gamma_a\}, \{y_{a|\vec{r}}\},\\ \{Y_{r}\},\beta}} & \quad 1 + \sum_{a}\tr(\Gamma_a E_a)\\
			\text{s.t.} & \quad 1 + \sum_{a}\tr(\Gamma_a E_a) +\tr(\beta) = 0,\\
			& \quad \Gamma_a - Y_{r} - y_{a|\vec{r}}\openone \succeq 0, \quad \forall a,\vec{r},\\
			& \quad -\sum_{a}y_{a|\vec{r}} r_{a} - \tr(Y_{\vec{r}}) = 0, \quad \forall \vec{r},\\
			& \quad -\Gamma_a - \beta \succeq 0, \quad \forall a.\\
		\end{aligned}
	\end{equation}
	that can be used in the algorithm for the search of the most non-projective POVM, as explained in the main text. Our implementation is available at \cite{github-code}.
	
	The results of the evaluations for different POVMs can be found in Section \ref{AppTables}. Notably, for the Hesse SIC-POVM, the worst-case noise turns out to be the POVM living in its orthogonal complement, i.e. $E^{a}_{\text{noise}} = \frac{1}{6} \openone - \frac{1}{2}\ketbra{\psi_{a}}$. Also for the Norrell SIC-POVM and for the SIC in $d = 2$, the worst-case noise turns out to be orthogonal to the POVM itself: as a consequence, the threshold visibility turns out to be equal to the witness bound $\beta$ in Table \ref{tab1}.
	Moreover, we note that, unlike for depolarisation noise, in the worst-case noise, both $\mathbf{E}_\text{fSIC2}$ and $\mathbf{E}_\text{fSIC3}$ show a simulability threshold higher than their unflagged version in accordance with Lemma~\ref{lem:flagging_worstcase}.

	\section{Result tables}\label{AppTables}
	\begin{center}
		\begin{table}[h!]
			\begin{tabular}{|c|c|c|c|}
				\hline
				Dim. & Type & $v_{\text{depol}}$ & $v_{\text{worst}}$ \\
				\hline
				\hline
				2 & $\text{most}_{d=2}$ & 0.8165*	& 0.9082*\\
				2 & 2a & 0.8165* & 0.9082*\\
				\hline
				3 & $\text{most}_{d=3}$ & 0.7931* & 0.8621*\\
				3 & 3a & 0.8003* & 0.8687*\\
				3 & 3b (Norrell) & 0.8058* & $\frac{8}{9}$*\\
				3 & 3c (Hesse) & 0.7931* & 0.8621*\\ 
				3 & fSIC2 & $0.7985^\dagger$ & 0.9137\\ 
				\hline
				4 & $\text{most}_{d=4}$ & $0.7823^\dagger$ & 0.8681\\
				4 & 4a & 0.8255* & 0.8739*\\
				4 & fSIC3 & $0.7823^\dagger$ & 0.8704\\
				\hline
			\end{tabular}
			\caption{Threshold visibility evaluated through the SDP for depolarising and ``worst-case'' noise model in \eqref{SDP_primal_visibility} and \eqref{SDP_primal_visibility_worst_case}. The SIC-POVMs in dimension $d=2,3,4$ are denoted via the labelling used in \cite{Scott_2010}. fSIC2 and fSIC3 denote the flagged SIC-POVMs introduced in the main text. The notation $\text{most}_{d=x}$ indicates the threshold visibility for the most non-projective POVM in dimension $d$ found using our numerical search algorithm. The * denotes the values for which the SDP gives exactly a projective simulation. The $^\dagger$ denotes the values, where the SDP does not give projective simulations, but measurements for which we found projective simulations by hand (see Section \ref{AppFlaggs}).}
		\end{table}
	\end{center}

\end{document}